\newcommand{\ketbra}[2]{\mbox{$\left| #1 \right\rangle   \left\langle #2 \right|$}}
\newcommand{\tr}{\mathrm{Tr}}
\newtheorem{lemma}{Lemma}
\newtheorem{theorem}{Theorem}
\newtheorem{definition}{Definition}
\begin{document}
\title{Coherence as a resource for source-independent quantum random-number generation}

\author{Jiajun Ma}
\affiliation{Center for Quantum Information, Institute for Interdisciplinary Information Sciences, Tsinghua University, Beijing 100084, China}

\author{Aishwarya Hakande}
\affiliation{Center for Quantum Information, Institute for Interdisciplinary Information Sciences, Tsinghua University, Beijing 100084, China}

\author{Xiao Yuan}
\email{xiao.yuan.ph@gmail.com}
\affiliation{Department of Materials, University of Oxford, Parks Road, Oxford OX1 3PH, United Kingdom}
\affiliation{Center for Quantum Information, Institute for Interdisciplinary Information Sciences, Tsinghua University, Beijing 100084, China}

\author{Xiongfeng Ma}
\email{xma@tsinghua.edu.cn}
\affiliation{Center for Quantum Information, Institute for Interdisciplinary Information Sciences, Tsinghua University, Beijing 100084, China}


\begin{abstract}

Measuring quantum states provides a means to generate genuine random numbers. It has been shown that genuine randomness can be obtained even with an uncharacterized source by measuring two incompatible bases [Phys.~Rev.~X 6, 011020 (2016)]. As coherence is the necessary source for generating randomness, we extend the scheme and propose a framework for quantum random number generation with general uncharacterized coherence resource. The previous scheme can be treated as a special case under the framework by considering a nonlinear uncertainty-relation-based coherence witness.
Considering general coherence witnesses, we propose a source-independent random-number generation scheme that achieves a higher randomness generation rate. Our paper highlights the close relation between coherence and random number generation, and may shed light on designing general semi-device-independent quantum information processing protocols.
\end{abstract}


\maketitle

\section{Introduction}
Random number generation has many important applications in various tasks. In some cases, such as Monte Carlo simulation, it only requires the random numbers to be statistically unbiased. Pseudo random numbers or physical random numbers based on classical mechanics, such as coin flipping and noise measuring, are sufficient. The outcomes of these procedures may appear random, but they are in principle predictable. In cryptography, one of the security foundations lies on the unpredictability of random numbers. For instance, a cryptophytic key requires genuinely random bits. The random numbers via classical mechanic procedures are not suitable for cryptosystems. Therefore, it is important to study the generation of unpredictable (or genuine) random numbers.


According to Born's rule \cite{born1926quantentheorie}, the measurement outcome of a quantum system that is in the superposition of the measurement basis is unpredictable. Based on quantum mechanic principles, there are many quantum random number generation (QRNG) schemes proposed in the past two decades. For a review of the subject, one can refer to Refs.~\cite{Ma2016QRNG,herrero2017quantum} and the references therein. In general, a QRNG setup consists of two parts, a source that contains randomness and a measurement that reads out the randomness. As shown in Fig.~\ref{fig:RNG}, the source emits a sequence of quantum signals, and the readout system measures them to produce random outcomes. For instance, if the states from the source form a sequence of qubits  $\ket{+} = (\ket{0}+\ket{1})/\sqrt{2}$, and the measurement is a projection onto the $\{\ket{0},\ket{1}\}$ basis, the outcomes are genuinely random.

\begin{figure}[bht]
\centering
\resizebox{7cm}{!}{\includegraphics[scale=1]{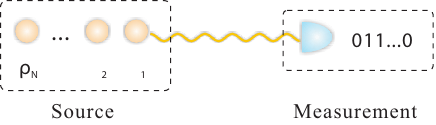}}
\caption{A quantum random-number generator can be generally decomposed into two parts: source and measurement.}\label{fig:RNG}
\end{figure}

In practice, the source may contain both genuine randomness and classical noise. The latter can normally be influenced by some unexpected environment parameters, such as the temperature. In a cryptographic picture, an adversary Eve may take advantage of the classical noise. Consider a source that emits maximally mixed state $\rho = 1/2(\ket{0}\bra{0}+\ket{1}\bra{1})$. Although the measurement outcomes on the $Z$ basis appear random, they are not genuinely random. This can be understood in the presence of Eve, who simply prepares $\ket{0}$ or $\ket{1}$ based on a predetermined random string. In this case, the measurement outcomes are entirely predictable. In fact, the predictability in random-number generators becomes a major security issue in current cryptosystems \cite{NSA}.

The key job of randomness analysis is to quantify the genuine randomness so that it can be extracted. One way of randomness quantification relies on modeling the QRNG implementation, which normally requires calibrating the source and measurement devices \cite{Ma2013Extractor}. In practice, however, the calibration is often hard to perform thoroughly. Once implemented devices deviate from theoretical models, the randomness can be compromised. Several proposals have been proposed to solve this problem. For instance, QRNGs via certifying randomness based on nonlocality tests \cite{pironio2010random} are called self-testing or device-independent schemes where neither the calibration of the quantum source nor the measurement is required. In practice, realizing a loophole-free Bell test is experimentally challenging, which requires high-fidelity state preparation and detection efficiency, as well as the accurate chronological sequence control \cite{hensen2015loophole}. Furthermore, the randomness generation rate is quite limited owing to a small violation of Bell inequalities obtained with the state-of-the-art technology \cite{miller2014universal,arnon2016simple}.

In real-life applications, a fully device-independent scenario may be too restrictive. By putting certain reasonable assumptions to devices, the performance of QRNGs would become practically acceptable. Along this direction, tremendous efforts have been devoted to find a trade-off between device independence and high randomness generation rate \cite{cao2015loss,cao2016source,Marangon17SI,Bischof2017measure,brask2017megahertz,van2017semi}. In this paper, we focus on the scenario of generating genuine randomness with well-calibrated measurement devices but uncharacterized sources, namely, source-independent quantum random-number generation (SIQRNG) \cite{cao2016source}. With certain reasonable assumptions on measurement devices, the SIQRNG schemes can be very practical. For instance, with a continuous-variable system, the randomness generation rate in such a scenario has achieved the gigabits per second (Gbps) regime \cite{Marangon17SI}. The intuition behind these schemes is the quantum uncertainty relation. Given two complementary measurement bases, $X$ and $Z$, if the outcome uncertainty of the $X$-basis measurement is small, its uncertainty of the $Z$-basis measurement must be large. The information on the complementary basis can be used to reveal the genuine randomness within the source. Since the uncertainty relation is state independent, these QRNG schemes are source independent.

Recently, genuine randomness has been shown to be essentially related to the coherence of the input quantum state in the measurement basis  \cite{Yuan15,yuan2016interplay}. There, the source is trusted and the extractable randomness is quantified in the asymptotic limit. In this paper, we propose a framework that extends this idea to a more general setting where the source is uncharacterized and the measurement outcomes are finite. We show that SIQRNG can be realized via measuring the coherence of the input quantum states. Moreover, we propose a method of estimating the coherence via coherence witnesses. By designing a nonlinear coherence witness, we show that the coherence estimation yields the same randomness quantification with the previous uncertainty-relation-based SIQRNG schemes. This coherence witness picture sheds light on the fact that the uncertainty-relation-based SIQRNG does not maximally efficiently extract the genuine randomness from the source. Our randomness analysis is based on the assumption that the source emits signals that are independent and identically distributed (i.i.d.). Furthermore, we propose a new SIQRNG scheme that uses tomography to estimate coherence and thereby generally achieves a higher randomness generation rate than the uncertainty-relation-based ones.

The paper is organized as follows. In Sec.~\ref{Sec:Preliminaries}, we review the preliminaries on coherence measures and SIQRNG. Then, in Sec.~\ref{Sec:Coherence}, we introduce a witness to measure the coherence. In Sec.~\ref{Sec:Framework}, we present the framework of QRNG via measuring coherence, based on which, a SIQRNG protocol is proposed in Sec.~\ref{Sec:protocol}. In Sec.~\ref{Sec:Simulation}, with numerical simulations, we show that our protocol generally achieves a higher randomness generation rate than the uncertainty-relation-based schemes. Finally, we conclude in Sec.~\ref{Sec:Discussion} with discussions on interesting related perspective subjects.

\section{Preliminaries: Coherence and SIQRNG}\label{Sec:Preliminaries}
\subsection{Resource theory of coherence}
The resource theory of coherence formalizes the intuition that quantum superposition is non-classical \cite{Baumgratz14}.  In this framework, with an orthogonal basis $\Pi=\{\ket{i}\}$ as the reference basis, one can define an incoherent state as $\sigma=\sum_ip_i\ketbra{i}{i}$, with $p_i\ge 0$ and $\sum_ip_i=1$. Incoherent operations are physical realizable operations that map incoherent states to incoherent states. Specifically, they are formed by the Kraus operators, $\{K_i\}$, that satisfy $K_i\mathcal{I}K_i^{\dagger}\in\mathcal{I}$, where $\mathcal{I}$ is the set of incoherent states.

Under this resource framework, a coherence measure needs to fulfill a few criteria. There are many proposals for coherence measures, such as the $l_1$-norm of coherence \cite{Baumgratz14}, the coherence of formation \cite{Yuan15}, and the robustness of coherence \cite{napoli2016robustness}. In this paper, we adopt the relative entropy of coherence \cite{Baumgratz14},
\begin{equation}\label{Eq:relCoherence}
C(\rho)=H(\Delta_{\Pi}(\rho))-H(\rho),
\end{equation}
where $H(\rho)$ is the von Neumann entropy of $\rho$ and $\Delta_{\Pi}(\rho)\equiv\sum_i\ketbra{i}{i}\rho\ketbra{i}{i}$ is the dephasing operation in the reference basis $\Pi$. Recently, the relative entropy of coherence is linked to the genuine randomness obtained by measuring $\rho$ in the basis $\Pi$ \cite{yuan2016interplay}.

\subsection{SIQRNG} \label{Sub:SIQRNG}
In the framework of SIQRNG, Eve prepares a bipartite quantum state of systems $A$ and $E$, represented by $\tau_{AE}$, where each system consists of $N$ partitions. Considering the most general attack by Eve, the joint state of the $2N$ partitions $\tau_{AE}$ can be prepared in an arbitrary bipartite state, where each partition may have arbitrary dimension, including the dimension of 0, and the joint state can be entangled among partitions and between $A$ and $E$. In this case, $\tau_{AE}$ can represent any quantum state with arbitrary dimension. Then Eve sends system $A$ to the legitimate user Alice and retains the rest system $E$. After receiving $A$, Alice measures each of the $N$ partitions using a random measurement setting. From Alice's perspective, as illustrated in Fig.~\ref{fig:SIQRNG2}, the source effectively emits a sequence of quantum states $\{\rho_i\}$, where $\rho_i$ denotes the reduced state of partition $i\in\{1, 2, ..., N\}$. Note that, in general, different partitions $\{\rho_i\}$ can be entangled with each other as the joint state $\tau_{AE}$ is generally entangled. Alice tries to extract genuine randomness from the measurement outcomes. Meanwhile, Eve aims at predicting the random numbers extracted from $\tau_A$ with the assistance of $\tau_E$, where $\tau_A = \tr_E[\tau_{AE}]$ and $\tau_E = \tr_A[\tau_{AE}]$ are the reduced density matrices of $A$ and $E$, respectively.

\begin{figure}[htb!]
\centering
\resizebox{7cm}{!}{\includegraphics[scale=1]{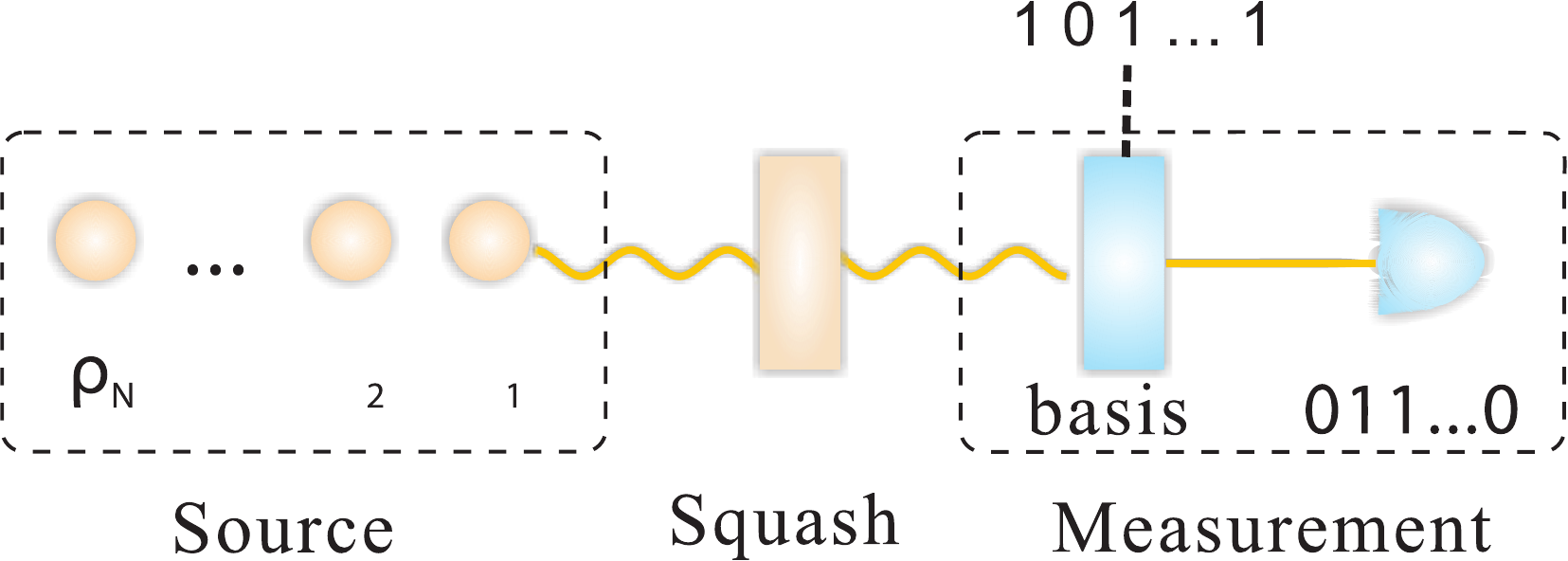}}
\caption{Source-independent quantum random-number generator. The squashing model is applied to transform the input states into qubits and vacuum states. Then, a basis is chosen to measure the squashed qubits. The dashed line on the measurement side denotes that, in some schemes \cite{cao2016source}, initial random bits are needed to choose the measurement basis randomly. Note that in some other schemes \cite{Marangon17SI}, a beam splitter is used to randomly select the measurement basis, which, of course, might raise the question why one should trust a beam splitter. Such kinds of discussions appeared in Ref.~\cite{cao2016source}.}\label{fig:SIQRNG2}
\end{figure}

In this paper, we consider a common optical realization of the SIQRNG, as most of the practical QRNGs are implemented with quantum optics \cite{Ma2016QRNG,herrero2017quantum}. In the measurement setting, we assume Alice uses typical optical components, such as phase modulators and threshold detectors, and she employs random assignment for double-click events. This is a widely used detection model with optical implementation.

In the source-independent scenario, since the source is assumed to be controlled by Eve, $\tau_A$ can be prepared in an arbitrary dimensional Hilbert space. As a major component of the security analysis for SIQRNG, we employ the \emph{squashing model} to remove the dimension arbitrariness of $\tau_A$ \cite{GLLP2004,Ma2007Entangled}. In the squashing model, Alice first applies a squashing operation which projects the received state to a qubit or a vacuum state and then performs the qubit measurement. In literature \cite{beaudry2008squashing,tsurumaru2008security}, the squashing model has been proven to be equivalent to the common optical implementations using threshold detectors and proper postprocessing. Note that the squashing model can be applied here since in the source-independent scenario, the measurement device is assumed to be trusted, thus it can be composed of threshold detectors that fit into the squashing model. Then, in the following randomness analysis, we directly employ this squashing model. In the postprocessing, the squashing model requires Alice to randomly assign the measurement outcome to 0 or 1 for the double-click events, which would affect the net generated randomness. This random assignment issue will be further analyzed in Sec.~\ref{Sec:Double}.


After the squashing operation, Alice can project state $\tau_A$ to $n$ qubits and $N-n$ vacuum states. Alice performs qubit measurements on the $n$ qubits. The rest $N-n$ vacuum states can be regarded as measurement losses. Since measurement devices are trusted in the SIQRNG scenario, one can assume that the $n$ qubits are fair sampled from the $N$ states \cite{cao2016source}. That is, the detection efficiency loophole here is not considered. Here, we remark that our analysis can be applied to other cases, such as the one where Alice uses a photon non-demolition measurement (compatible with the squashing model). In practice, post-selection (discarding losses) is feasible, whereas photon non-demolishing measurement technology is not available with current technologies. If Alice uses post selection, she might need initial randomness in making measurement basis choices. However, in the asymptotic limit, the amount of this consumed initial randomness can be reduced to a negligible ratio. In fact, for the scheme proposed in Table~\ref{Table:protocol}, Alice can choose $q\rightarrow0$, and this amount of randomness is sufficient for randomness extraction. The following randomness analysis will focus on the $n$ squashed qubits.

Here, we briefly review the previous SIQRNG scheme \cite{cao2016source}.
\begin{enumerate}[1)]
\item
An untrusted source (controlled by Eve) emits a sequence of quantum states.
\item
Alice (or Eve) squashes the quantum states into qubits and vacua. Alice discards the vacua and retains the $n$ squashed qubits.
\item
Alice randomly chooses $n_x$ qubits out of the $n$ squashed qubits and measures them in the $X$ basis. Within $n_x$ outcomes, the ratio of outcome $\ket{-}$ is $e_{xb}$, which is defined to be the error rate. Ideally, Alice expects the source to emit state $\ket{+}$ and hence the result of $\ket{-}$ is defined as an error.
\item
Alice measures the rest $n_z=n-n_x$ squashed qubits in the $Z$ basis to obtain $n_z$ raw random bits.
\item 
Alice extracts  $n_z(1-S(e_{xb}+\theta))-t_e$ random bits from the raw random bits using a universal hashing function, where $S$ is the binary Shannon entropy, $\theta$ is the deviation due to statistical fluctuations, and $2^{-t_e}$ is the failure probability of the randomness extraction.
\end{enumerate}

In above scheme, the security proof techniques of QKD are employed in the randomness analysis. Here is the argument. Ideally, the source should emit states $\ket{+}$. Then, Alice measures them in the $Z$ basis $\{\ket{0},\ket{1}\}$ to generate random numbers. In the scenario of SIQRNG, the source is allowed to emit arbitrary quantum states in arbitrary dimensions. On the measurement end, one can consider a virtual protocol. First, a squashing model is applied to project the quantum states into a sequence of qubit and vacuum states. Then Alice performs an error correction procedure that transforms all states to $\ket{+}$. Finally she measures them in the $Z$ basis. By designing the error correcting code appropriately, this operation can be commuting with the $Z$-basis measurement \cite{shor2000simple}. Also the squashing model is proven to be equivalent to the threshold detection model with appropriate postprocessing \cite{beaudry2008squashing}. Hence, Alice does not need to perform the virtual protocol, which requires a universal quantum computer. Instead, she can perform the $Z$-basis measurement first and then apply a randomness extraction on the measurement outcomes. The number of extractable random bits, $n_z(1-S(e_{xb}+\theta))-t_e$, is derived from the uncertainty principle together with the $X$-basis error rate $e_{xb}$. The randomness extraction essentially functions the same as the error correction procedure in the virtual protocol.

In the randomness extraction, or the error correction in the virtual protocol, Alice needs to know the error rate in the $X$ basis, $\{\ket{+},\ket{-}\}$, where $\ket{\pm}=(\ket{0}\pm\ket{1})/\sqrt{2}$. Thus, in the SIQRNG scheme, Alice needs to randomly test the quantum state in the $X$ basis to estimate its error rate, which is later used for randomness quantification \cite{cao2016source}.

In this paper, we would show the randomness of SIQRNG from a difference perspective. The idea is based on the recently discovered link between coherence and genuine randomness in a quantum state \cite{Yuan15,yuan2016interplay}. Instead of applying error correction, Alice estimates the coherence of the qubit states via measuring randomly sampled qubit states. Then she measures the rest qubits in the $Z$ basis to generate the raw randomness. The estimated coherence is used to bound the genuine randomness of the raw data. In postprocessing, she distills the genuine randomness from the raw data.

\section{Measuring coherence with a coherence witness}\label{Sec:Coherence}
Normally, to estimate the coherence of an unknown state, one needs to perform a full state tomography to obtain the density matrix $\rho$. In some cases, full tomography information is not available. For example, as the dimension of the state increases, the number of required measurement in tomography increases quadratically, which becomes challenging for experiments. Thus, it is interesting to estimate the coherence of an unknown state without a full tomography.

A similar problem raises in the field of entanglement measure, where it is expected to estimate the entanglement with a limited number of measurements. The solution there is to employ entanglement witnesses, which are originally designed to justify whether quantum states are entangled or not to estimate entanglement \cite{audenaert2006correlations,eisert2007quantitative,guhne2007estimating}. Recently, this idea has been extended to the resource theory of coherence  \cite{napoli2016robustness}, i.e., estimating the coherence with coherence witnesses.

The original coherence witness is a linear function of the density matrix $\rho$ \cite{napoli2016robustness}. Here, we extend this notion to an arbitrary function of $\rho$.
\begin{definition}
A coherence witness is a function of $\rho$, $W(\rho)$, that satisfies the following criteria,
\begin{enumerate}
\item
$\forall\rho\in\mathcal{I}$, $W(\rho)\ge0$;

\item
$\exists\rho\notin\mathcal{I}$, $W(\rho) < 0$.
\end{enumerate}
\end{definition}

A linear coherence witness has been shown to be useful to bound the coherence \cite{napoli2016robustness}. Here, we design a nonlinear coherence witness to bound the relative entropy of coherence.

\begin{lemma}
Given a reference basis $\Pi=\{\ket{i}\}$ in a $d$-dimensional Hilbert space,
\begin{equation}\label{Eq:CWu}
W_u(\rho)=H(\Delta_\Xi(\rho))-\log_2d,
\end{equation}
is an coherence witness, where $\Xi$ and $\Pi$ are mutually unbiased bases of the same Hilbert space, so that $\Xi$ is maximally incompatible with $\Pi$.
\end{lemma}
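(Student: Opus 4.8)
The plan is to verify the two defining criteria of a coherence witness directly, exploiting the single structural fact that characterizes a mutually unbiased basis: for every $\ket{i}\in\Pi$ and every $\ket{j}\in\Xi$ one has $|\braket{i}{j}|^2 = 1/d$.

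First I would establish criterion~1, nonnegativity on incoherent states. Take an arbitrary $\rho\in\mathcal{I}$, which by definition is diagonal in $\Pi$, say $\rho=\sum_i p_i\ketbra{i}{i}$ with $p_i\ge 0$ and $\sum_i p_i=1$. The state $\Delta_\Xi(\rho)$ is fixed by its diagonal entries in $\Xi$, namely $\bra{j}\rho\ket{j}=\sum_i p_i|\braket{j}{i}|^2$. Substituting the mutually-unbiased condition $|\braket{j}{i}|^2=1/d$ collapses this to $\sum_i p_i/d = 1/d$, independent of $j$. Hence $\Delta_\Xi(\rho)$ is the maximally mixed state, so $H(\Delta_\Xi(\rho))=\log_2 d$ and $W_u(\rho)=0$. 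This in fact shows the stronger statement that $W_u$ vanishes identically on $\mathcal{I}$, which immediately yields $W_u(\rho)\ge 0$.

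Next I would exhibit an explicit coherent state violating nonnegativity, to satisfy criterion~2. The natural candidate is any element of the unbiased basis itself, $\rho=\ketbra{j}{j}$ with $\ket{j}\in\Xi$. Because $\Xi$ is mutually unbiased to $\Pi$, the vector $\ket{j}$ has components of equal modulus on every $\ket{i}$, so $\rho$ is manifestly not diagonal in $\Pi$ and therefore $\rho\notin\mathcal{I}$. Since $\rho$ is already diagonal in the $\Xi$ basis, the dephasing acts trivially, $\Delta_\Xi(\rho)=\rho$, and the entropy of this rank-one state is $H(\Delta_\Xi(\rho))=0$. Consequently $W_u(\rho)=-\log_2 d<0$ for every $d\ge 2$, completing the verification.

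The argument is short, and the only place demanding care is the first step: one must recognize that mutual unbiasedness is precisely what guarantees that every $\Pi$-diagonal state is mapped to the uniform distribution by $\Delta_\Xi$. If one replaced $\Xi$ by a generic second basis rather than a genuine MUB, the diagonal entries $\bra{j}\rho\ket{j}$ would retain a dependence on $\{p_i\}$, $H(\Delta_\Xi(\rho))$ could drop below $\log_2 d$, and nonnegativity on $\mathcal{I}$ would fail. Thus the main (and essentially only) obstacle is correctly invoking the defining MUB identity; the rest is the elementary computation above.
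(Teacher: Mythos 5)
Your proof is correct and follows essentially the same route as the paper's: it checks that $W_u$ vanishes on $\mathcal{I}$ via the MUB identity $|\braket{i}{j}|^2=1/d$, and then exhibits the same violating state $\rho=\ketbra{j}{j}$ with $\ket{j}\in\Xi$, giving $W_u(\rho)=-\log_2 d<0$. The only difference is that you spell out the computations the paper declares ``not hard to see'' and you explicitly verify $\ketbra{j}{j}\notin\mathcal{I}$, a detail the paper leaves implicit.
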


\begin{proof}
The incoherent state set $\mathcal{I}$ is defined in basis $\Pi$. Since $\Xi$ is a mutually unbiased basis of $\Pi$, for all $\rho\in\mathcal{I}$, we have $H(\Delta_\Xi(\rho))=\log_2d$, and hence $W_u(\rho)=0$. Also, for all $\ket{i'}\in\Xi$, we have $H(\Delta_\Xi(\ket{i'}\bra{i'}))=0$, and hence $W_u(\ket{i'}\bra{i'})=-\log_2 d<0$.
\end{proof}

\begin{theorem} \label{Thm:CW2Coh}
Given a reference basis $\Pi=\{\ket{i}\}$ and a state $\rho$ in a $d$-dimensional Hilbert space, the relative entropy of coherence $C(\rho)$ can be bounded by the coherence witness $W_u(\rho)$ defined in Eq.~\eqref{Eq:CWu},
\begin{equation}\label{Eq:LBC}
C(\rho)\ge -W_u(\rho)=\log_2d-H(\Delta_{\Xi}(\rho)),
\end{equation}
where $\Xi$ is a mutually unbiased basis of $\Pi$.
\end{theorem}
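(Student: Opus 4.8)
The plan is to rewrite the relative entropy of coherence in its variational/relative-entropy form and then collapse the bound onto the maximally mixed state via the data-processing inequality. First I would record the standard identity
\begin{equation}
C(\rho)=S\big(\rho\,\|\,\Delta_{\Pi}(\rho)\big),\qquad S(\rho\|\sigma)\equiv\mathrm{Tr}(\rho\log_2\rho)-\mathrm{Tr}(\rho\log_2\sigma),
\end{equation}
which follows directly from the definition in Eq.~\eqref{Eq:relCoherence}: since $\Delta_{\Pi}(\rho)$ is diagonal in $\Pi$, so is $\log_2\Delta_{\Pi}(\rho)$, hence $\mathrm{Tr}\big(\rho\log_2\Delta_{\Pi}(\rho)\big)$ only involves the $\Pi$-diagonal entries of $\rho$ and equals $\mathrm{Tr}\big(\Delta_{\Pi}(\rho)\log_2\Delta_{\Pi}(\rho)\big)=-H(\Delta_{\Pi}(\rho))$. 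Together with $\mathrm{Tr}(\rho\log_2\rho)=-H(\rho)$ this reproduces $C(\rho)=H(\Delta_{\Pi}(\rho))-H(\rho)$.

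The key step is to apply the data-processing inequality $S(\mathcal{E}(\rho)\|\mathcal{E}(\sigma))\le S(\rho\|\sigma)$ to the dephasing channel $\mathcal{E}=\Delta_{\Xi}$, taking the incoherent reference state to be $\sigma=\Delta_{\Pi}(\rho)$. The decisive observation, and the only place the mutually-unbiased hypothesis is used, is that dephasing a $\Pi$-diagonal state in the MUB $\Xi$ always returns the maximally mixed state: for each $\ket{i}\in\Pi$ one has $\Delta_{\Xi}(\ketbra{i}{i})=\sum_j|\braket{e_j}{i}|^2\,\ketbra{e_j}{e_j}=(1/d)\sum_j\ketbra{e_j}{e_j}=I/d$, using $|\braket{e_j}{i}|^2=1/d$ for all $i,j$. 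Writing $\Delta_{\Pi}(\rho)=\sum_i p_i\ketbra{i}{i}$ and summing gives $\Delta_{\Xi}(\Delta_{\Pi}(\rho))=I/d$.

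Combining these, data processing yields $C(\rho)=S(\rho\|\Delta_{\Pi}(\rho))\ge S(\Delta_{\Xi}(\rho)\,\|\,I/d)$, and the right-hand side evaluates immediately to $-H(\Delta_{\Xi}(\rho))-\mathrm{Tr}\big(\Delta_{\Xi}(\rho)\log_2(I/d)\big)=\log_2 d-H(\Delta_{\Xi}(\rho))=-W_u(\rho)$, which is exactly Eq.~\eqref{Eq:LBC}. I expect the substantive point worth stating carefully to be the reduction $\Delta_{\Xi}(\Delta_{\Pi}(\rho))=I/d$, since that is where mutual unbiasedness enters and it is what turns the generic data-processing bound into the clean maximally-mixed reference; the relative-entropy rewriting of $C$ and the invocation of monotonicity are otherwise routine. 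A naive attempt to route the argument through the Maassen--Uffink relation gives only the weaker bound $H(\Delta_{\Pi}(\rho))+H(\Delta_{\Xi}(\rho))\ge\log_2 d-H(\rho)$, so the data-processing framing is what supplies the correct $+H(\rho)$ strengthening, and it also makes transparent when equality is attained (namely when $\Delta_{\Xi}$ discards no distinguishing information between $\rho$ and $\Delta_{\Pi}(\rho)$).
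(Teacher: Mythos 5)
Your proof is correct, but it takes a genuinely different route from the paper's. The paper's proof is two lines: it invokes, as a black box, the entropic uncertainty relation with the mixedness term, $H(\Delta_\Pi(\rho))+H(\Delta_\Xi(\rho))\ge -\log_2 c+H(\rho)$ with $c=\max_{i,i'}|\braket{i}{i'}|^2$ (cited from Berta \emph{et al.}), specializes to $c=1/d$ for mutually unbiased bases, and rearranges using $C(\rho)=H(\Delta_\Pi(\rho))-H(\rho)$. You instead derive the bound from first principles: the identity $C(\rho)=S(\rho\,\|\,\Delta_\Pi(\rho))$, monotonicity of relative entropy under the CPTP map $\Delta_\Xi$, and the key observation $\Delta_\Xi(\Delta_\Pi(\rho))=\mathrm{I}/d$, which is exactly where mutual unbiasedness enters. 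In effect you have inlined a proof of the very lemma the paper cites: your argument, run with a general test basis, gives $\Delta_\Xi(\Delta_\Pi(\rho))\le c\,\mathrm{I}$ and hence $C(\rho)\ge -\log_2 c-H(\Delta_\Xi(\rho))$, which is the cited uncertainty relation itself (this data-processing argument is its standard proof). What the paper's route buys is brevity and an explicit thematic link to uncertainty relations, which matters since the paper's thesis is that uncertainty-relation-based SIQRNG is a special case of coherence witnessing; what your route buys is self-containedness, a transparent localization of the MUB hypothesis, equality conditions, and the generalization to non-MUB witnesses for free. One small slip in your closing remark: the Maassen--Uffink relation reads $H(\Delta_\Pi(\rho))+H(\Delta_\Xi(\rho))\ge\log_2 d$, not $\ge\log_2 d-H(\rho)$ as you wrote; your substantive point stands (plain Maassen--Uffink lacks the $+H(\rho)$ strengthening needed for Eq.~\eqref{Eq:LBC}, since the theorem is equivalent to $H(\Delta_\Pi(\rho))+H(\Delta_\Xi(\rho))\ge\log_2 d+H(\rho)$), but the inequality you attributed to it is misstated. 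This does not affect the validity of your proof, which is sound as written.
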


\begin{proof}
The dephasing operators, $\Delta_\Pi(\rho)$ and $\Delta_\Xi(\rho)$, can be viewed as two projection measurements, which have  the quantum uncertainty relation \cite{berta2010uncertainty},
\begin{equation}\label{Eq:uncertainty}
H(\Delta_\Pi(\rho))+H(\Delta_\Xi(\rho))\ge -\log_2 c+H(\rho),
\end{equation}
where $c=\max_{i,i'}|\left\langle i | i' \right\rangle|^2$, with $\ket{i}\in\Pi$ and $\ket{i'}\in\Xi$.

The two bases $\Pi$ and $\Xi$ are mutually unbiased, and hence $c=1/d$. Rearranging the terms in Eq.~\eqref{Eq:uncertainty} and using the definition $C(\rho)=H(\Delta_\Pi(\rho))-H(\rho)$, Eq.~\eqref{Eq:LBC} is obtained.
\end{proof}

From Theorem \ref{Thm:CW2Coh}, one can estimate the relative entropy of coherence via measuring the state in the complementary basis of the reference basis. This idea is similar to the one employed in the uncertainty-relation-based SIQRNG \cite{cao2016source,Marangon17SI}. There, the intrinsic randomness generated via the $Z$-basis measurement is estimated by measuring the state in the complementary $X$ basis. In the next section, we propose a framework that formalizes the relation between these two scenarios.

\section{Framework of SIQRNG via measuring coherence}\label{Sec:Framework}
The task of estimating coherence of an unknown quantum state shares similarities with randomness evaluation in SIQRNG. In both scenarios, the source state is uncharacterized whereas the measurement is trusted. Meanwhile, the amount of genuine randomness within the source can be quantified by the coherence of the quantum state  \cite{Yuan15,yuan2016interplay}. Therefore, extracting genuine randomness in SIQRNG can be reduced to the problem of estimating coherence within the source. In this section, we would present a framework that links the two tasks.


\subsection{Quantification of randomness}
Following the discussion of SIQRNG in Sec.~\ref{Sec:Preliminaries}, we focus on the $n$ squashed qubits, which contribute one raw data bit each. Alice needs to quantify the genuine randomness in the $n$-bit raw data from the $n$-qubit state, $\tau_A=\textrm{Tr}_E(\tau_{AE})\in \mathcal{H}_2^{\otimes n}$, where $\mathcal{H}_2$ denotes a two-dimensional Hilbert space. In the partial trace, we put the $N-n$ vacuum states to system $E$. Note that the $n$ qubits can be correlated with each other, or even with Eve's system  $\tau_E$.

Suppose Alice randomly chooses $n_z$ qubits and measures them in the $Z$ basis to generate raw random bits, whereas she measures the rest $n-n_z$ qubits in some other complementary bases for parameter estimation, which would give Alice information about $\tau_A$. Denote the measurement outcome in the $Z$ basis (an $n_z$-bit string) by $K_z$. Here, Alice's measurement can be viewed as a dephasing operation on each qubit of subsystem $A$ in the $Z$ basis, $\Delta_{Z^{\otimes n_z}}^A(\tau_{AE})$. Then the randomness contained in $K_z$ is quantified by \cite{konig2009operational}
\begin{equation}\label{Eq:RZN}
R^{\varepsilon_1}(K_z) = \min_{\tau_{AE}}H_{\mathrm{min}}^{\varepsilon_1}(A|E)_{\Delta_{Z^{\otimes n_z}}^A(\tau_{AE})},
\end{equation}
where the minimization runs over all possible states of Eve that satisfy $\textrm{Tr}_E(\tau_{AE})=\tau_A$, and $H_{\mathrm{min}}^{\varepsilon_1}$ is the smooth min-entropy, defined in Appendix \ref{Ap:Conditional}, with a smooth parameter $\varepsilon_1$.

The min-entropy $R^{\varepsilon_1}(K_z)$ is the key parameter for randomness extraction. With universal hashing  \cite{impagliazzo1989pseudo}, such as Teoplitz-matrix hashing, one can extract random bits that are $\varepsilon$-close to a uniformly distributed string from Eve's point of view. Here, the security parameter is $\varepsilon=\varepsilon_1+\varepsilon_2$, with $\varepsilon_2$ as the failure probability introduced in the randomness extraction procedure.



\subsection{Randomness analysis}
In the following, we analyze the randomness with the assumption that the $n$ squashed qubits are i.i.d. This assumption is also made in the scenario of collective attacks in QKD, where Eve attacks each signal in an i.i.d.~manner. In a more general setting, the $n$ qubits might be entangled, which corresponds to the scenario of coherent attacks in QKD. It is proven that the security parameter for coherent attacks is only polynomially larger than the security parameter for collective attacks \cite{christandl2009postselection}. The extra information available to the adversary for coherent attacks can be compensated by slightly reducing the size of the final random bits in the privacy amplification stage. We expect that a similar argument can be employed here to obtain the security proof against the most general sources. We leave the randomness analysis with an correlated source for future study.



With the i.i.d.~assumption, the joint state that outputs the raw random bits can be expressed by $\tau_{AE}=\rho_{AE}^{\otimes n_z}$, where $\rho_{AE}$ is the squashed joint quantum state of each signal. From Eq.~\eqref{Eq:RZN}, one can have
\begin{equation}\label{Eq:RHM}
R^{\varepsilon_1}(K_z) = \min_{\rho_{AE}}H_{\mathrm{min}}^{\varepsilon_1}(A|E)_{\Delta_{Z^{\otimes n_z}}^A(\rho_{AE}^{\otimes n_z})}.
\end{equation}
For $n_z\ge \frac85\log_2\frac{2}{\varepsilon_1^2}$, the smooth min-entropy can be lower bounded by the conditional von Neumann entropy, defined as $H(A|E)_{\rho_{AE}}=H(\rho_{AE})-H(\rho_E)$ \cite{tomamichel2009fully}. Thus one has
\begin{equation}\label{Eq:RNC}
R^{\varepsilon_1}(K_z) \ge n_z\min_{\rho_{AE}}H(A|E)_{\Delta_Z^A(\rho_{AE})}-7.09\sqrt{n_z\log_2\frac{2}{\varepsilon_1^2}},
\end{equation}
whose derivation is shown in Appendix \ref{Ap:Derivation}. Meanwhile, $\min_{\rho_{AE}}H(A|E)_{\Delta_Z^A(\rho_{AE})}$ is related to the relative entropy of coherence of $\rho_A$ \cite{yuan2016interplay},
\begin{equation}\label{Eq:EC}
\min_{\rho_{AE}}H(A|E)_{\Delta_Z^A(\rho_{AE})}=C(\rho_A),
\end{equation}
where the reference basis for $C(\rho_A)$ is the $Z$ basis. Inserting Eq.~\eqref{Eq:EC} into Eq.~\eqref{Eq:RNC}, one has
\begin{equation}\label{Eq:RNC2}
R^{\varepsilon_1}(K_z)\ge n_zC(\rho_A)-7.09\sqrt{n_z\log_2\frac{2}{\varepsilon_1^2}}.
\end{equation}
We remark that this expression only holds for $n_z\ge\frac85\log_2\frac{2}{\varepsilon_1^2}$, which is normally satisfied in practice, typically, $n_z\ge 95$ for $\varepsilon_1=10^{-10}$.

\subsection{Randomness analysis via a coherence witness}
The randomness analysis result Eq.~\eqref{Eq:RNC2} implies that one can estimate the amount of randomness in a quantum state by measuring the coherence of the state. In Sec.~\ref{Sec:Coherence}, we have shown that, without full state tomography, one can lower bound the coherence with coherence witnesses. Therefore, there is a close relation between the coherence witness and the SIQRNG: Any coherence witness that is able to lower bound the coherence can be employed to realize a SIQRNG scheme.

As an example, we apply this analysis method to the SIQRNG scheme described in Sec.~\ref{Sub:SIQRNG}. The measurement used by Alice in the SIQRNG scheme forms a coherence witness $W_u$ as shown in Eq.~\eqref{Eq:CWu}. Then, applying Theorem \ref{Thm:CW2Coh} to Eq.~\eqref{Eq:RNC2}, the number of genuine random bits, denoted by $R^{\varepsilon_1}_u$, is estimated by
\begin{equation}\label{Eq:RRU}
R^{\varepsilon_1}_u\ge -n_z W_u(\rho_A)-7.09\sqrt{n_z\log_2\frac{2}{\varepsilon_1^2}},
\end{equation}
where $W_u(\rho_A)=H(\Delta_X(\rho_A)) - 1$. In the asymptotic limit, when the number of emitted qubits $n$ approaches infinitely large, the randomness generation rate is given by
\begin{equation}\label{Eq:UCGR}
r_u = \frac{R^{\varepsilon_1}_u}{N}|_{N\rightarrow\infty}\ge q\beta(1-H(\Delta_X(\rho_A))),
\end{equation}
where $\beta = n/N$ is the transmittance of the signal, and $q=n_z/n$ is the ratio of $Z$-basis measurement. In this limit, Alice can set $q\rightarrow1$ to maximize $r_u$. Note that Eq.~\eqref{Eq:UCGR} coincides with the randomness generation rate evaluated via the complementary uncertainty relation \cite{cao2016source}.

To summarize, the security analysis in our framework is divided by two steps. First is the squashing operation, which maps uncharacterized signal states to qubit states. Applying the squashing model requires proper postprocessing of measurement outcomes, such as random assignment of double clicks to be discussed in Section \ref{Sec:Double}. Second is the coherence characterization of the squashed qubit states, which quantifies the secure random bits of the statistics obtained by measuring the qubit states.

\section{Tomography-based SIQRNG} \label{Sec:protocol}
Note that the SIQRNG protocols based on the complementary uncertainty relation do not necessarily extract the maximal amount of the randomness from the source. For instance, suppose that the source emits state $(\ket{0}+i\ket{1})/\sqrt{2}$, from Eq.~\eqref{Eq:UCGR}, one obtains a lower bound of $r_u$ to be 0, thus no genuine randomness can be extracted. Nevertheless, the measurement outcome on the $Z$ basis is in fact genuinely random. In this case, the genuine randomness cannot be revealed by the coherence witness using the $X$ measurement. Instead, the randomness can be witnessed with another complementary basis $Y = \{\ket{\pm i} = (\ket{0}\pm i\ket{1})/\sqrt{2}\}$. Without \emph{a priori} knowledge of the source, one might choose a bad witness to underestimate the genuine randomness.

For the SIQRNG scheme described in Sec.~\ref{Sub:SIQRNG}, by adding one more measurement basis, Alice can obtain a better estimation of the coherence of $\rho_A$ via a full state tomography. Then she can extract more genuine randomness from the raw data. Based on this observation, we propose a SIQRNG protocol based on tomography as presented in Table~\ref{Table:protocol}. 

Note that in Table~\ref{Table:protocol} the $Z$-basis measurement data are used for both measurement tomography and randomness generation. We remark that the data used for tomography is, in principle, kept secret from any other party, which means the testing data are not revealed to the eavesdropper. Therefore, in principle, the $X$- and $Y$-basis measurement data can also be used to extract extra randomness with the analysis similar to that of the $Z$-basis measurement data. But in the limit where $q\rightarrow0$, the amounts of randomness generated by the $X$- and $Y$-basis measurement become negligible.

\begin{table}[htb]
\caption{Source-independent quantum random number generation} \label{Table:protocol}
\begin{framed}
\centering
\begin{enumerate}[1.]

\item
\textbf{State preparation}
\begin{enumerate}
\item
An untrusted source (might be controlled by Eve) emits $N$ quantum states in arbitrary dimensions, which are sequentially sent to the readout system.
\item
A \emph{squashing operation} transforms the quantum states into $n$ qubits and $N-n$ vacua.
\end{enumerate}

\item
\textbf{Measurement}
\begin{enumerate}
\item
The $N-n$ vacua are discarded and the remaining $n$ squashed qubit states are post-selected for randomness generation.
\item
Alice randomly chooses $n_x$, $n_y$, and $n_z$ qubits for the $X$-, $Y$-, and $Z$-basis measurements, with probability $q_x=q$, $q_y=q$, $q_z=1-2q$, respectively. Denote $p_x$, $p_y$, and $p_z$ to be the rates to obtain the outcomes $\ket{+}$, $\ket{i+}$, and $\ket{0}$, respectively.
\item
The $Z$-basis measurement outcomes are recorded as the raw data.
\end{enumerate}

\item
\textbf{State tomography} 
With the measurement results, $p_x$, $p_y$, and $p_z$, Alice can estimate the density matrix of the squashed qubits, $\rho_A$. Note that statistical fluctuations need to be considered here.

\item
\textbf{Randomness evaluation and extraction} 
With the information of $\rho_A$, Alice can bound the genuine randomness of the raw data and apply a proper randomness extractor to obtain the final random bits.
\end{enumerate}
\end{framed}
\end{table}

In the protocol, the coherence of the source $C(\rho_A)$ can be accurately estimated with a full tomography of $\rho_A$. Then, the number of genuine random bits, denoted by $R^{\varepsilon_1}_t$, can be estimated via Eq.~\eqref{Eq:RNC2},
\begin{equation}\label{Eq:RRB3}
R^{\varepsilon_1}_t\ge n_zC(\rho_A)-7.09\sqrt{n_z\log_2\frac{2}{\varepsilon_1^2}}.
\end{equation}
In the asymptotic limit, the randomness generation rate, denoted by $r_t$, is given by
\begin{equation}\label{Eq:TMGR}
r_t\ge \frac{R^{\varepsilon_1}_t}{N}|_{N\rightarrow\infty}=q_z\beta C(\rho_A),
\end{equation}
where $\beta=n/N$ is the transmittance of the signal and $q_z=n_z/n$ is the ratio of $Z$-basis measurement. In large data-size limit, Alice can set $q_z\rightarrow1$.

In tomography, the density matrix, $\rho_A$ can be estimated from measurement outcomes, $p_x$, $p_y$, and $p_z$, defined in Table \ref{Table:protocol}. Write $\rho_A$ as $\rho_A = (I + (2\vec{p}-1)\cdot\vec{\sigma})/2$, where $\vec{p}=(p_x, p_y, p_z)$ and $\vec{\sigma} = (\sigma_x, \sigma_y, \sigma_z)$  are the Pauli matrices. Substituting $\rho_A$ into Eq.~\eqref{Eq:relCoherence} and \eqref{Eq:RRB3}, one can get
\begin{equation}\label{Eq:Rtomo}
C(\rho_A) = H(p_z)- H\left(\frac{p_o+1}{2}\right),
\end{equation}
and
\begin{equation}\label{Eq:RH}
	R^{\varepsilon_1}_t \ge n_zH(p_z) - n_zH\left(\frac{p_o+1}{2}\right) -7.09\sqrt{n_z\log_2\frac{2}{\varepsilon_1^2}},
\end{equation}
with $p_o= \sqrt{4(p_x^2+p_y^2+p_z^2-p_x-p_y-p_z)+3}$.

\subsection{Squashing model}\label{Sec:Squashing}
As discussed in Sec.~\ref{Sec:Preliminaries}, the squashing model should be applied to project the quantum state from the uncharacterized source into a sequence of qubits. The squashing model requires that, depending on the measurement setting, appropriate postprocessing should be implemented. Here, Alice can employ the postprocessing of the squashing model used in the analysis of quantum state tomography  \cite{fung2011universal}.

According to the Supplemental Material of Ref.~\cite{fung2011universal}, the double click events should be considered to derive a set of bounds of the obtained statistics. Specifically, for each measurement bases $j\in\{X, Y, Z\}$, let $n^{0}_j$, $n^{1}_j$ and $n^{d}_j$ denote the numbers of the two single-click events and the double-click event, respectively. Then Alice obtains a set of qubit states, $\mathcal{S}$, that is compatible with the measurement results. That is, they fulfill the following constraints,
\begin{equation} \label{Eq:Squash}
\frac{n^{0}_j}{n^{0}_j+n^{1}_j+n^{d}_j}\le p_j \le \frac{n^{0}_j+n^{d}_j}{n^{0}_j+n^{1}_j+n^{d}_j}.
\end{equation}
In the following, we denote the lower bound for $p_j$ in the above inequalities by $p_j^L$, and the upper bound by $p_j^U$.

In our protocol, Alice needs to consider the worst case of $\rho_A\in\mathcal{S}$. That is, she should minimize $C(\rho_A)$ in Eq.~\eqref{Eq:RH} over $\mathcal{S}$. In Appendix \ref{Ap:Partial}, we show that $C(\rho_A)$ is a unimodal function with respect to each $p_j$, with the minimal value achieved for $p_j=1/2$. Without loss of generality, from now on, we assume $p_j^U \ge 1/2$, otherwise Alice can flip the bit label in the $j$ basis. Denote the worst-case value of $p_j$ for the coherence quantification by $p^w_j$, thus $p_j^w=\max(p_j^L,1/2)$.

\subsection{Double clicks}\label{Sec:Double}
As discussed in Secs.~\ref{Sub:SIQRNG} and Sec.~\ref{Sec:Squashing}, the squashing model requires the random assignment of double-click events. Note that for the double-click events in the $X$ and $Y$ bases, the random assignment postprocessing need not be actually implemented as these measurement outcomes are only used for tomography testing in Eq.~\eqref{Eq:Squash} where Alice only needs to count the number of double-click events and evaluate the errors introduced by these events. On the other hand, as the measurement outcome in the $Z$ basis is used to generate the raw random bits, Alice should implement the random assignment on the double-click events to map them to single-value outcomes. In the rest of this subsection, we first introduce the random assignment method as directly required by the squashing model. Then we introduce an alternative discard method, which is more practical in experiments.

\subsubsection{Random assignment method}
Here, we consider a postprocessing method that Alice randomly assigns 0 or 1 to the double-click events in the $Z$ basis. After the squashing model analysis, Alice obtains $p^w_z$ as the worst case estimation of $p_z$. Then, in the random assignment procedure, the probability of assigning value $0$ to the double-click events, denoted by $p_a$, should be compatible with $p^w_z$,
\begin{equation}
p^w_z=\frac{n_z^0+p_an_z^d}{n_z^0+n_z^1+n_z^d}.
\end{equation}
Thus, $p_a$ is given by
\begin{equation}\label{Eq:pa}
p_a=\frac{p^w_zn_z-n_z^0}{n_z^d}.
\end{equation}
Note that the random assignment method generally consumes extra randomness, which should be taken into account when evaluating the net randomness generation rate. Here, the randomness cost in the double-click assignment procedure is $n_z^dH(p_a)$. Thus the asymptotic net randomness generation rate is given by
\begin{equation}\label{Eq:AssRtomo}
	\begin{aligned}
		R^{n}_{t}=&R^{\varepsilon_1}_t-n_z^dH(p_a) \\
\ge & n_zH(p_z^w) - n_zH\left(\frac{p_o^w+1}{2}\right) -7.09\sqrt{n_z\log_2\frac{2}{\varepsilon_1^2}}\\
&-n_z^dH(p_a),
	\end{aligned}
\end{equation}
where $p_o^w=\sqrt{4({p_x^w}^2+{p_y^w}^2+{p_z^w}^2-p_x^w-p_y^w-p_z^w)+3}$.

\subsubsection{Discard method}
In practice, the random assignment might be technically challenging to implement. Thus, we consider an simpler method to deal with the double-click events in the $Z$ basis --- discarding all the double-click events. Denote $\rho^s$ and $\rho^d$ to be the density matrices of the squashing qubit states, single-click, and double-click, respectively. Then, one has $p_d\rho^d_A+(1-p_d)\rho^s_A=\rho_A$, where $p_d=n_z^d/n_z$ is the ratio of double-click events in the $Z$ basis. Once discarding all the double-click events, the random bits in the remaining data can be lower bounded by
\begin{equation}\label{Eq:DisRtomo1}
R^{n}_{t}\ge n_z^s C(\rho_A^s) -7.09\sqrt{n_z\log_2\frac{2}{\varepsilon_1^2}},
\end{equation}
where $n_z^s=n_z-n_z^d$ is the number of single-click events in the $Z$ basis. To estimate $C(\rho_A^s)$, one can employ the concavity of relative entropy of coherence,
\begin{equation}
p_dC(\rho^d_A)+(1-p_d)C(\rho^s_A)\ge C(\rho_A).
\end{equation}
Thus
\begin{eqnarray}\label{Eq:CohIneq}
n_z^s C(\rho^s_A)&\ge& n_zC(\rho_A) - n_dC(\rho^d_A)\\\nonumber
&\ge& n_zC(\rho_A) - n_d.
\end{eqnarray}
where $C(\rho^d_A)\le 1$ is used in the second inequality. Combining Eqs.~\eqref{Eq:CohIneq}, \eqref{Eq:DisRtomo1}, and \eqref{Eq:RH}, the net randomness generation rate is given by
\begin{equation}\label{Eq:DisRtomo2}
\begin{aligned}
R^{n}_{t}\ge \, & n_zH(p_z^w) - n_zH\left(\frac{p_o^w+1}{2}\right) \\
&-7.09\sqrt{n_z\log_2\frac{2}{\varepsilon_1^2}}-n_z^d
\end{aligned}
\end{equation}
Note that the randomness generation rate by the discard method is generally smaller than that the rate by the random assignment method in Eq.~\eqref{Eq:AssRtomo}.

\subsection{Analysis of statistic fluctuations} \label{Sec:Finite}
In practice, the number of squashed qubits $n$ is finite. Thus the probabilities used for parameter estimation would suffer from statistical fluctuations. Here, we analyze the finite-data-size effect on the estimation of $p^w_j$. In order to distinguish the probabilities with the measurement rates, denote the expectation values of $p^w_j$ to be $\bar{p}_j$, which would be inserted into Eq.~\eqref{Eq:Rtomo} to evaluate the genuine randomness. Since the qubits are assumed to be i.i.d., Alice can employ the Hoeffding inequality \cite{hoeffding1963probability} to estimate the discrepancy between $p^w_j$ and $\bar{p}_j$ caused by the statistical fluctuations,
\begin{equation}
\textrm{Prob}(\bar{p}_j  \le p_j^w - \theta) \le e^{-2 \theta^2n_j} = \varepsilon_j,\label{Eq:ERX}
\end{equation}
Here, we assume $p^w_j - \theta \ge 1/2$, otherwise we take the worst bound $\bar{p}_j=1/2$. Replacing $\{p^w_j\}$ by $\{p^w_j - \theta\}$ in Eq.~\eqref{Eq:AssRtomo} or \eqref{Eq:DisRtomo2}, one can obtain the lower bound of randomness with a total failure probability,
\begin{equation}
\varepsilon = \varepsilon_1 + \varepsilon_2 + \varepsilon_x + \varepsilon_y + \varepsilon_z,
\end{equation}
where $\varepsilon_1$ is introduced by the smooth parameter and $\varepsilon_2$ is introduced by randomness extraction.

\section{Simulation}\label{Sec:Simulation}
In this section, we first analyze the performance of the tomography-based SIQRNG and compare it to the original proposal. Then, by taking account of statistical fluctuations, we optimize the ratio of qubits used for the tomography testing.

\subsection{Comparing to the original SIQRNG}
For simplicity, we consider the asymptotic limit where the number of emitted quantum states $N$ is infinitely large. Then, the randomness generation rate for the original protocol is given by Eq.~\eqref{Eq:UCGR}, whereas the rate for the tomography-based protocol is given by Eq.~\eqref{Eq:TMGR}. In both cases, $q_z$ is set to be $1$. Besides, we assume the single photon source is used without considering the photon loss and the detector inefficiency, and hence the transmittance $\beta=1$. Then, the randomness generation rate for the original protocol is given by
\begin{equation} \label{Eq:ruIdeal}
r_u|_{q_z=1,\beta=1} \ge 1-H(\Delta_X(\rho_A)),
\end{equation}
and the tomography-based protocol by
\begin{equation} \label{Eq:rtIdeal}
r_t|_{q_z=1,\beta=1} \ge C(\rho_A).
\end{equation}

In the comparison, we assume the input state has the form of $\rho_A=(\mathrm{I}+x\sigma_x+y\sigma_y)/2$, where $x$ and $y$ are two parameters and $x^2+y^2\le1$. The comparison between Eqs.~\eqref{Eq:ruIdeal} and \eqref{Eq:rtIdeal} is illustrated in Fig.~\ref{fig:COMPARISON}. One can clearly see that the tomography-based scheme generally provides a higher randomness generation rate than the original proposal. The larger the parameter $y$ is, the bigger gaps the two schemes have. In general, one can consider a more general state, $\rho_A=(\mathrm{I}+x\sigma_x+y\sigma_y+z\sigma_z)/2$, where the gap of randomness generation rate between the two schemes is nonzero as long as $y \neq 0$.

\begin{figure}
\begin{minipage}[t]{0.5\linewidth}
\centering
\includegraphics[width=2.5in]{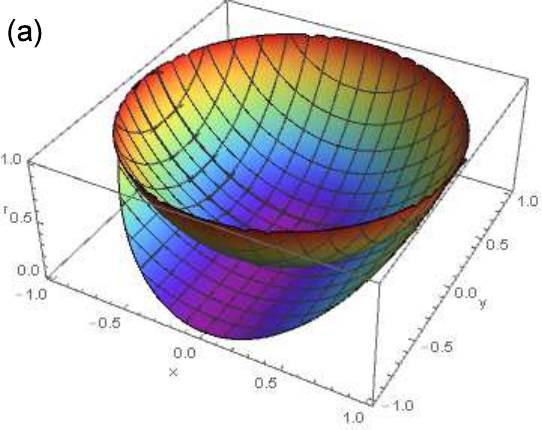}
\label{fig:side:a}
\end{minipage}%
\begin{minipage}[t]{0.5\linewidth}
\centering
\includegraphics[width=2.5in]{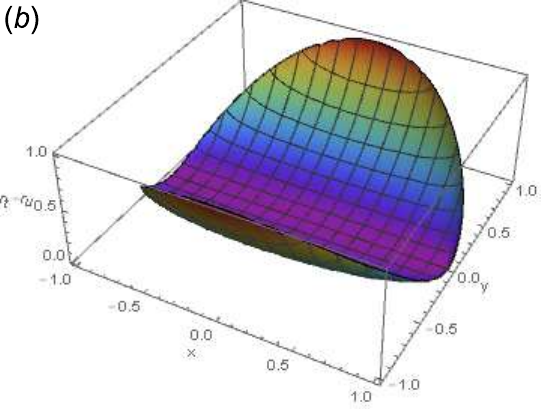}
\label{fig:side:b}
\end{minipage}
\caption{Comparison of the randomness generation rates with an input qubit state $\rho_A=\frac{\mathrm{I}+x\sigma_x+y\sigma_y}{2}$, with $N\rightarrow\infty, q_z=1, \beta=1$. (a) The lower surface describes the randomness generation rate for the uncertainty-relation-based scheme $r_u$ as shown in Eq.~\eqref{Eq:ruIdeal}, whereas the upper surface describes the randomness generation rate for the tomography-based scheme $r_t$ as shown in Eq.~\eqref{Eq:rtIdeal}. (b) Illustration of the gap between the two schemes, $r_t-r_u \ge 0$.}
\label{fig:COMPARISON}
\end{figure}

\subsection{Parameter optimization}
Now we analyze the performance of the tomography-based protocol by simulating a practical experiment setup. Details of the simulation model is presented in Appendix \ref{Ap:Simulation}. Consider a practical source, consisting of a laser and a polarization modulator, which emits $N$ coherent-state pulses with an intensity of $\mu_0$. We assume the quantum state is prepared to be $\ket{+}$, which is then transmitted through a depolarization channel. Thus, the received quantum state can be described by
\begin{equation} \label{Eq:rhodepol}
\rho_A=p\frac{\mathrm{I}}{2}+(1-p)\ketbra{+}{+},
\end{equation}
where $p\in[0, 1]$. The readout system consists of a polarization rotator (used for basis selection), a polarization beam splitter, and two threshold detectors with the same detection efficiencies. Denote the total transmittance by $\eta$, including the detector efficiency and the coupling efficiency between the source and the detector. It is equivalent to consider a coherent state with intensity of $\mu\equiv\mu_0\eta$. Here, we ignore the detection caused by the dark count since for QRNG dark counts are normally negligible comparing to $\eta$. For a more comprehensive model taking account of the dark counts, one can refer to the corresponding QKD model \cite{Ma2005Practical}.

In this model, we put the misalignment errors into the parameter $p$. Then, the simulated worst estimations of $p_j$ are given by,
\begin{equation} \label{Eq:OPT2}
	\begin{aligned}
		\bar{p}_x &= \frac{1-p-e^{-\mu}+pe^{-\frac{\mu}{2}}}{1-e^{-\mu}} - \theta,\\
\bar{p}_y &= \frac{e^{-\frac{\mu}{2}}-e^{-\mu}}{1-e^{-\mu}} - \theta,\\
\bar{p}_z &=\frac{e^{-\frac{\mu}{2}}-e^{-\mu}}{1-e^{-\mu}}-\theta.
	\end{aligned}
\end{equation}
Meanwhile, the number of double click events in the $Z$ basis is
\begin{equation}\label{Eq:OPTd}
n^{d}_z=N(1-2q)(1+e^{-\eta\mu_0}-2e^{-\frac{\eta\mu_0}{2}}).
\end{equation}
The detailed model with the calculations of Eqs.~\eqref{Eq:OPT2} and \eqref{Eq:OPTd} are shown in Appendix \ref{Ap:Simulation}. Note that all the $\bar{p}_j$ here are less than $1/2$, as required in the randomness analysis shown in Sec.~\ref{Sec:protocol}. Then one can evaluate the extractable randomness from Eq.~\eqref{Eq:DisRtomo2}. Here, we adopt the discard method to process the double-click events in the $Z$ basis. In fact, one would obtain the same randomness generation rate by the random assignment method in the case of Eq.~\eqref{Eq:rhodepol}.

In the simulation, we pick $p=0$, $p=0.1$, $p=0.3$ for the input state of Eq.~\eqref{Eq:rhodepol}, and set $\varepsilon_x=\varepsilon_y=\varepsilon_z=\varepsilon_1=\varepsilon_2=10^{-10}$ and the number of pulses $N=10^{10}$. First, by optimizing the tomography testing parameter $q$, we show the dependence of the randomness generation rate, given by Eq.~\eqref{Eq:DisRtomo2}, on the intensity $\mu$ in Fig.~\ref{fig:NUMERIC1}. From the figure, one can see that, initially, the randomness generation rate increases with the intensity of the signal $\mu$ due to the increase in the single-click events relative to the no-click events. As $\mu$ keeps increasing, the double-click events become dominant. Then, the randomness generation rate starts to decrease. Thus, there is an optimal choice of $\mu$. In experiment, one should characterize total transmittance $\eta$ first and then set the light intensity $\mu_0$ to make $\mu=\mu_0\eta$ optimal.

\begin{figure}[ht!]
\centering
\resizebox{9cm}{!}{\includegraphics[scale=1]{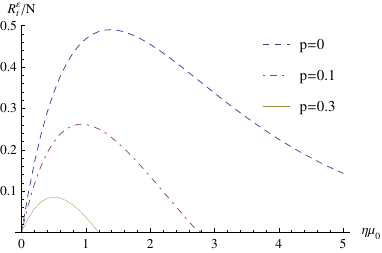}}
\caption{Randomness generation rate vs.~intensity $\mu$ with various depolarization parameters $p$. The optimal $\mu$ for $p=0$, $0.1$, and $0.3$ are $\mu=1.4$, $0.9$, and $0.5$, respectively.}\label{fig:NUMERIC1}
\end{figure}

Next, we investigate how the optimal tomography testing parameter, $q$, varies with the number of pulses, $N$. Here, we pick $p=0.1$ and optimize the the intensity $\mu$. When $N\le10^{4.8}$, no net random bits can be generated. One can see from Fig.~\ref{fig:NUMERIC2}(a) that the optimal $q$ starts from $0.14$ and drops down close to 0 with the increase in $N$. This is consistent with the intuition that $q_z=1-2q\rightarrow 1$ as $N$ goes to infinity. Note that the optimization of $q$ assume that $q_x=q_y$ in the tomography-based protocol. In general, one can optimize $q_x$ and $q_y$ separately. 

We also investigate how the randomness generation rate varies with the number of pulses $N$, as shown in Fig.~\ref{fig:NUMERIC2}(b). Here, we pick $p=0.1$ and optimize both the intensity $\mu$ and the tomography testing parameter $q$. One can see that no randomness can be obtained as $N\le10^{4.8}$, beyond which point the rate increases with $N$. This increase is mainly contributed by the increase ratio of the $Z$-basis measurement as $N$ becomes large. This is similar to the biased basis case in QKD \cite{Fung2010Finite}.

\begin{figure}[ht!]
\begin{minipage}[t]{0.5\linewidth}
\centering
\includegraphics[width=2.5in]{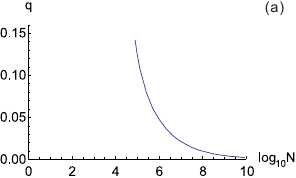}
\label{fig:simulation:a}
\end{minipage}%
\begin{minipage}[t]{0.5\linewidth}
\centering
\includegraphics[width=2.5in]{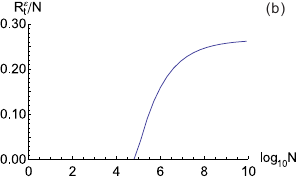}
\label{fig:simulation:b}
\end{minipage}
\caption{The performance of the tomography-based SIQRNG by optimizing the basis selection parameter $q$, where we pick $p=0.1$ and have the intensity parameter $\mu$ optimized. (a) shows the optimal value of $q$ vs the number of pulses $N$. (b) illustrates the randomness generate rate $R^{\varepsilon}_t/N$ vs the number of pulses $N$.}
\label{fig:NUMERIC2}
\end{figure}

\section{Discussion}\label{Sec:Discussion}
In this paper, we propose a framework for SIQRNG via measuring coherence of an unknown quantum state. We show that the uncertainty-relation-based SIQRNG is essentially related to estimating the relative entropy of coherence with a coherence witness. Furthermore, we propose a SIQRNG scheme based on state tomography. By simulating a typical QRNG setup, we show that our protocol generally enjoys a higher randomness generation rate than the uncertainty-relation-based ones.

The security analysis of QRNG is very similar to that of QKD. The mathematical definition of security in two tasks is essentially the same. For example, privacy amplification in QKD is closely related to the randomness extraction in QRNG. In practice, there are mainly two differences between them. (a) QKD involves two legitimate parties Alice and Bob; thus it requires error correction to ensure the consistency of the random numbers shared between them; whereas QRNG only involves one party, and hence error correction is unnecessary. (b) Local randomness used for encoding and basis selection is free in QKD, whereas randomness is a resource in QRNG. Nevertheless, most of the security analysis techniques in QRNG, including ours, are borrowed from QKD. In terms of the security analysis, the i.i.d.~assumption in QRNG is equivalent to the collective attack assumption in QKD, whereas the non-i.i.d.~scenario in QRNG corresponds to the coherent attack assumption in QKD. The difference between the two attacks vanishes when the data size goes to infinity \cite{christandl2009postselection}, and we would expect the same deduction for our framework. In order to link coherence with randomness in the non-i.i.d.~case, one needs to consider the one-shot coherence distillation \cite{zhao2018one,bu2017max}. This is an interesting subject for future study.

Along the direction of this paper, one can realize a SIQRNG scheme by designing the coherence witness that is adapted to specific experimental conditions. Besides, it is promising to extend the framework to high-dimensional QRNG, e.g., schemes based on continuous variables \cite{Marangon17SI} or laser phase fluctuations \cite{ZhouRandomness}. A possible challenge of this extension is the development of the high-dimensional squashing model. As the resource framework of coherence is related to the security proof of QRNG, it is also interesting to investigate whether similar relation exists between coherence and QKD.

\section*{Acknowledgments}
We thank Zhen Zhang for fruitful discussions. This work was supported by the National Natural Science Foundation of China Grants No.~11674193 and No.~11875173 and the National Key R\&D Program of China Grants No.~2017YFA0303900 and No.~2017YFA0304004.

\appendix
\section{Smooth min-entropy}\label{Ap:Conditional}
In this appendix, we provide the definition of smooth min-entropy \cite{konig2009operational}.
\begin{definition}
Given a bipartite density operator $\rho_{AE}$, the min-entropy of $A$ conditioned on $E$ is defined as
\begin{equation}
H_{\mathrm{min}}(A|E)_{\rho_{AE}}\equiv -\min_{\sigma_E}D_{\infty}(\rho_{AE}||\mathrm{I}\otimes\sigma_E)
\end{equation}
where the minimization ranges over all normalized density operators $\sigma_E$ on $E$ and
\begin{equation}
D_{\infty}(\tau||\tau')\equiv \min\{\lambda\in \mathbb{R} : \tau\leq 2^{\lambda}\tau'\}.
\end{equation}
\end{definition}
\noindent Then the smooth min-entropy of $A$ conditioned on $E$ is defined as
\begin{equation}
H_{\mathrm{min}}^{\varepsilon}(A|E)_{\rho_{AE}}\equiv \sup_{\rho_{AE}'}H_{\mathrm{min}}(A|E)_{\rho_{AE}'},
\end{equation}
where the supremum ranges over all density operators $\rho_{AE}'$ which are $\varepsilon$-close to $\rho_{AE}$. Normally, the distance between $\rho_{AE}'$ and $\rho_{AE}$ can be measured by Bures distance $\|\rho-\sigma\|_B=\sqrt{2-F(\rho,\sigma)}$, where $F(\rho,\sigma)=\|\sqrt{\rho}\sqrt{\sigma}\|_1$ and $\|\cdot\|_1$ is the $l_1$-norm.

\section{Derivation of Eq.~\eqref{Eq:RNC}}\label{Ap:Derivation}
With the i.i.d.~assumption, the amount of randomness from transmitted quantum states is given by
\begin{equation}\label{Eq:RHM2}
R^{\varepsilon_1}(K_z) = \min_{\rho_{AE}}H_{\mathrm{min}}^{\varepsilon_1}(A|E)_{\Delta_{Z^{\otimes n_z}}^A(\rho_{AE}^{\otimes n_z})}.
\end{equation}
As $n_z\ge \frac85\log_2\frac{2}{\varepsilon_1^2}$, the smooth min-entropy can be lower bounded by the conditional von Neumann entropy , $H(A|E)_{\rho_{AE}}=H(\rho_{AE})-H(\rho_E)$ \cite{tomamichel2009fully},
\begin{equation}\label{Eq:HMN}
\begin{aligned}
	H_{\mathrm{min}}^{\varepsilon_1}(A|E)_{\Delta_{Z^{\otimes n_z}}^A(\rho_{AE}^{\otimes n_z})}\ge n_zH(A|E)_{\Delta_Z^A(\rho_{AE})}-\sqrt{n_z}\delta(\varepsilon_1, \eta),
\end{aligned}
\end{equation}
where $\delta(\varepsilon_1, \eta)=4(\log_2\eta)\sqrt{\log_2\frac{2}{\varepsilon_1^2}}$ and $\eta\le\sqrt{2^{-H_{\mathrm{min}}(A|E)_{\Delta_Z^A(\rho_{AE})}}}+\sqrt{2^{H_{\mathrm{max}}(A|E)_{\Delta_Z^A(\rho_{AE})}}}+1$, with  $H_{\mathrm{min}}$ and $H_{\mathrm{max}}$ being the min-entropy and maximal-entropy, respectively. Here, $H_{\mathrm{min}}(A|E)_{\Delta_Z^A(\rho_{AE})}\ge 0$ and $H_{\mathrm{max}}(A|E)_{\Delta_Z^A(\rho_{AE})}\le 1$ and hence one has $\eta\le 2+\sqrt{2}$. Thus $\delta(\varepsilon_1, \eta)\le k\sqrt{\log_2\frac{2}{\varepsilon_1^2}}$, with $k=4\log_2(2+\sqrt{2})\approx 7.09$. Therefore, inserting Eq.~\eqref{Eq:HMN} into Eq.~\eqref{Eq:RHM2}, one has
\begin{equation}\label{Eq:RNC3}
R^{\varepsilon_1}(K_z) \ge n_z\min_{\rho_{AE}}H(A|E)_{\Delta_Z^A(\rho_{AE})}-7.09\sqrt{n_z\log_2\frac{2}{\varepsilon_1^2}}.
\end{equation}

\section{Partial derivatives of $C(\rho_A)$}\label{Ap:Partial}
In this appendix, we analyze the partial derivatives of $C(\rho_A)$. From Eq.~\eqref{Eq:RH},
\begin{equation}
C(\rho_A) = H(p_z)- H\left(\frac{p_o+1}{2}\right),
\end{equation}
where $p_o= \sqrt{4(p_x^2+p_y^2+p_z^2-p_x-p_y-p_z)+3}$. Thus,
\begin{equation}
	\begin{aligned}
		 \frac{\partial C(\rho_A)}{\partial p_z} &= \frac{\partial}{\partial p_z}\left[H(p_z)-H\left(\frac{p_o+1}{2}\right)\right]		\\
&= \frac{2p_z-1}{p_o}\log_2\left(\frac{1+p_o}{1-p_o}\right)-\log_2\left(\frac{p_z}{1-p_z}\right),  \\
 \frac{\partial C(\rho_A)}{\partial p_x} &= \frac{2p_x-1}{p_o}\log_2\left(\frac{1+p_o}{1-p_o}\right),	 \\
\frac{\partial C(\rho_A)}{\partial p_y} &= \frac{2p_y-1}{p_o}\log_2\left(\frac{1+p_o}{1-p_o}\right).
	\end{aligned}
\end{equation}
By analyzing above equations, for $j\in\{x, y, z\}$,  $\frac{\partial^2 C(\rho_A)}{\partial p_j^2} \ge 0$. Therefore, function $\frac{\partial C(\rho_A)}{\partial p_j}$ is nondecreasing with $p_j$. Thus, 
\begin{eqnarray}
\frac{\partial C(\rho_A)}{\partial p_j} &=& 0\hspace{2cm}p_j = 1/2, \nonumber \\
&\le& 0 \hspace{2cm}		p_j \le 1/2, \nonumber \\
&\ge& 0 \hspace{2cm}		p_j \ge 1/2.    \nonumber
\end{eqnarray}

\section{Simulation model}\label{Ap:Simulation}
In this appendix, we analyze a simulation model with a practical experimental setup. Consider a practical source, consisting of a laser and a polarization modulator, which emits $N$ coherent-state pulses with an intensity of $\mu_0$. We assume the quantum state is prepared to be $\ket{+}$, which is then transmitted through a depolarization channel. Thus, the received quantum state can be described by $\rho_A=p\frac{\mathrm{I}}{2}+(1-p)\ketbra{+}{+}$, where $p\in[0, 1]$. The readout system consists of a polarization rotator (used for basis selection), a polarization beam splitter, and two threshold detectors with the same detection efficiencies. Denote the total transmittance by $\eta$, including the detector efficiency and the coupling efficiency between the source and the detector. Here, we ignore the detection caused by dark counts since for QRNG, dark counts are normally negligible comparing to $\eta$.

In the following, we aim to evaluate the expected value of all the directly obtained experimental statistics. They include the number of total click events, $n_j$, the number of single-click events of the two detectors, $n_j^{0}$, $n_j^{1}$, and the number of double-click events, $n_j^{d}$, when measuring in the $j\in\{X, Y, Z\}$ basis.

Note that the number of photons of the coherent-state pulse follows the Poisson distribution, $P(n)=\frac{e^{-\mu_0}\mu_0^n}{n!}$. With the total transmittance of the system $\eta$, the total number of clicks in the $X$ basis is
\begin{eqnarray}
n_x&=&Nq\sum_n e^{-\mu_0}\frac{\mu_0^n}{n!}[1-(1-\eta)^n]\nonumber\\
&=&Nq(1-e^{-\eta\mu_0}),
\end{eqnarray}
where $q$ is the probability of selecting the $X$ basis. Similarly, one has
\begin{eqnarray}
n_y&=&Nq(1-e^{-\eta\mu_0}),\\
n_z&=&N(1-2q)(1-e^{-\eta\mu_0}),
\end{eqnarray}
Denote the probability of double clicks when emitting $m$ photons and measuring in the $j\in\{X, Y, Z\}$ basis by $p_{doub}^{i,m}$. Since the polarization of the input state is $p\frac{\mathrm{I}}{2}+(1-p)\ketbra{+}{+}$, in which only the component $\frac{\mathrm{I}}{2}$ may result in the double-click events, then,
\begin{eqnarray}\label{Eq:AppB16}
p_{doub}^{x,m}&=&\frac{p}{2^m}\sum_k C_m^k[1-(1-\eta)^k][1-(1-\eta)^{m-k}]\nonumber\\
&=& \frac{p}{2^m}\sum_k [C_m^k-C_m^k(1-\eta)^k-C_m^k(1-\eta)^{m-k}+C_m^k(1-\eta)^m]\nonumber\\
&=& \frac{p}{2^m}[2^m+(1-\eta)^m2^m-2(2-\eta)^m]\nonumber\\
&=& p(1+(1-\eta)^m-2(1-\frac\eta2)^m).
\end{eqnarray}
Meanwhile, for measurement basis $Y$ and $Z$, both component $\frac{\mathrm{I}}{2}$ and $\ketbra{+}{+}$ of $\rho_A$ result in the double-click events with equal probability, thus one has
\begin{eqnarray}\label{Eq:AppB2}
p^{y, m}_{doub} =p^{z,m}_{doub} &=& \frac{1}{2^m}\sum_k C_m^k[1-(1-\eta)^k][1-(1-\eta)^{m-k}]\nonumber\\
&=&1+(1-\eta)^m-2(1-\frac\eta2)^m.
\end{eqnarray}
Thus, the total number of double clicks in the $X$ basis is
\begin{eqnarray}
n_x^d&=&Nq\sum_m e^{-\mu_0}\frac{\mu_0^m}{m!}p^{x,m}_{doub}\nonumber\\
&=&Nqp(1+e^{-\eta\mu_0}-2e^{-\frac{\eta\mu_0}{2}}).
\end{eqnarray}
And similarly, one has
\begin{eqnarray}
n_y^d&=&Nq\sum_m e^{-\mu_0}\frac{\mu_0^m}{m!}p^{y,m}_{doub}\nonumber\\
&=&Nq(1+e^{-\eta\mu_0}-2e^{-\frac{\eta\mu_0}{2}}),\\
n_z^d&=&N(1-2q)\sum_m e^{-\mu_0}\frac{\mu_0^m}{m!}p^{z,m}_{doub}\nonumber\\
&=&N(1-2q)(1+e^{-\eta\mu_0}-2e^{-\frac{\eta\mu_0}{2}}),
\end{eqnarray}
Now we evaluate the number of single-click events corresponding to outcome $\ket{+}$ and $\ket{-}$. Note that, for $\rho_A$, the component $\ketbra{+}{+}$ never results in outcome $\ket{-}$, whereas the component $\frac{\mathrm{I}}{2}$ contributes to the single-click events of $\ket{+}$ and $\ket{-}$ with the equal probability. Thus, one has
\begin{equation}
\begin{aligned}
n_x^{0}&=(1-p)n_x+\frac12(pn_x-n^d_x) \\
&=Nq(1-p-e^{-\eta\mu_0}+pe^{-\frac{\eta\mu_0}{2}}).\\
n_x^{1}&=\frac12(pn_x-n^d_x) \\
&=Nqp(e^{-\frac{\eta\mu_0}{2}}-e^{-\eta\mu_0}).
\end{aligned}
\end{equation}
For measurement basis $Y$ and $Z$, both the component $\ketbra{+}{+}$ and $\frac{\mathrm{I}}{2}$ contributes to the single-click events of the two outcomes with the equal probability. Thus, one has
\begin{equation}
\begin{aligned}
n_y^{0}=n_y^{1}&=\frac12(n_y-n_y^d)\\
&=Nq(e^{-\frac{\eta\mu_0}{2}}-e^{-\eta\mu_0}).\\
n_z^{0}=n_z^{1}&=\frac12(n_z-n_z^d)\\
&=N(1-2q)(e^{-\frac{\eta\mu_0}{2}}-e^{-\eta\mu_0}).
\end{aligned}
\end{equation}
Combining the above results, the experimental obtained statistics are given by
\begin{equation}
\begin{aligned}
n^{0}_x&=Nq(1-p-e^{-\eta\mu_0}+pe^{-\frac{\eta\mu_0}{2}}),\\
n^{1}_x&=Nqp(e^{-\frac{\eta\mu_0}{2}}-e^{-\eta\mu_0}),\\
n^{d}_x&=Nqp(1+e^{-\eta\mu_0}-2e^{-\frac{\eta\mu_0}{2}}),\\
n^{0}_y&=n^{1}_y=Nq(e^{-\frac{\eta\mu_0}{2}}-e^{-\eta\mu_0}),\\
n^{d}_y&=Nq(1+e^{-\eta\mu_0}-2e^{-\frac{\eta\mu_0}{2}}),\\
n^{0}_z&=n^{1}_z=N(1-2q)(e^{-\frac{\eta\mu_0}{2}}-e^{-\eta\mu_0}),\\
n^{d}_z&=N(1-2q)(1+e^{-\eta\mu_0}-2e^{-\frac{\eta\mu_0}{2}}),\\
n_x&=Nq(1-e^{-\eta\mu_0}),\\
n_y&=Nq(1-e^{-\eta\mu_0}),\\
n_z&=N(1-2q)(1-e^{-\eta\mu_0}).
\end{aligned}
\end{equation}
Inserting these expressions into Eq.~\eqref{Eq:Squash}, one has
\begin{equation}
\begin{aligned}
\frac{1-p-e^{-\eta\mu_0}+pe^{-\frac{\eta\mu_0}{2}}}{1-e^{-\eta\mu_0}}\le & p_x \le \frac{1-(1-p)e^{-\eta\mu_0}-pe^{-\frac{\eta\mu_0}{2}}}{1-e^{-\eta\mu_0}},\\
\frac{e^{-\frac{\eta\mu_0}{2}}-e^{-\eta\mu_0}}{1-e^{-\eta\mu_0}}\le & p_y \le \frac{1-e^{-\frac{\eta\mu_0}{2}}}{1-e^{-\eta\mu_0}},\\
\frac{e^{-\frac{\eta\mu_0}{2}}-e^{-\eta\mu_0}}{1-e^{-\eta\mu_0}}\le & p_z \le \frac{1-e^{-\frac{\eta\mu_0}{2}}}{1-e^{-\eta\mu_0}}.
\end{aligned}
\end{equation}
Following the analysis of the squashing model and statistical fluctuations, the worst case expectation values of $p_x$, $p_y$ and $p_z$ are thus given by
\begin{equation}
\begin{aligned}
\bar{p}_x = P_x^L - \theta &= \frac{1-p-e^{-\eta\mu_0}+pe^{-\frac{\eta\mu_0}{2}}}{1-e^{-\eta\mu_0}} - \theta,\\
\bar{p}_y = P_y^L - \theta &= \frac{e^{-\frac{\eta\mu_0}{2}}-e^{-\eta\mu_0}}{1-e^{-\eta\mu_0}} - \theta,\\
\bar{p}_z = P_z^L - \theta &= \frac{e^{-\frac{\eta\mu_0}{2}}-e^{-\eta\mu_0}}{1-e^{-\eta\mu_0}}-\theta.
\end{aligned}
\end{equation}
Inserting $\bar{p}_j$ into Eq.~\eqref{Eq:AssRtomo} or \eqref{Eq:DisRtomo2}, one can estimate the amount of extractable randomness from the raw random bits.

\bibliographystyle{apsrev4-1}

\bibliography{bibCR}
\end{document}